\documentclass[pra,twocolumn,showpacs,10pt,floatfix]{revtex4-1}

\usepackage{amsmath}
\usepackage{amsfonts}
\usepackage{amssymb}
\usepackage{mathrsfs}
\usepackage{latexsym}
\usepackage{graphicx}
\usepackage{color}

\newtheorem{theorem}{Theorem}

\newtheorem{definition}[theorem]{Definition}

\newenvironment{proof}[1][Proof]{\noindent\textbf{#1.} }{\ \rule{0.5em}{0.5em}}

\begin{document}

\title{Maxwell's demons in multipartite quantum correlated systems} 

\author{Helena C. \surname{Braga}}
\email{helenacbraga@gmail.com}
\author{Clodoaldo C. \surname{Rulli}}
\email{clodoaldorulli@gmail.com}
\author{Thiago R. de \surname{Oliveira}}
\email{tro@if.uff.br}
\author{Marcelo S. \surname{Sarandy}}
\email{msarandy@if.uff.br}

\affiliation{Instituto de F\'{\i}sica, Universidade Federal Fluminense, Av. Gal. Milton Tavares de Souza s/n, Gragoat\'a, 
24210-346, Niter\'oi, RJ, Brazil.}

\date{\today }

\begin{abstract}
We investigate the extraction of thermodynamic work by a Maxwell's demon in a multipartite quantum correlated 
system. We begin by adopting the standard model of a Maxwell's demon as a Turing machine, either in a classical 
or quantum setup  depending on its ability of implementing classical or quantum conditional dynamics, respectively. 
Then, for an $n$-partite system $(A_1, A_2, \cdots, A_n)$, we introduce a protocol of work extraction that 
bounds the advantage of the quantum demon over its classical counterpart through the amount of multipartite quantum 
correlation present in the system, as measured by a thermal version of the global quantum discord. This result  
is illustrated for an arbitrary $n$-partite pure state of qubits with Schmidt decomposition, where it is shown that 
the thermal global quantum discord exactly quantifies the quantum advantage. Moreover, we also consider the  
work extraction via mixed multipartite states, where examples of tight upper bounds can be obtained.  
\end{abstract}

\pacs{03.67.-a, 03.67.Mn, 03.65.Ud}

\maketitle

\section{Introduction}

The concept of Maxwell's demon has introduced a deep relationship between information theory and 
thermodynamics~\cite{Maxwell:1871}. As originally proposed, the demon can be thought as a microscopic 
"intelligent" being capable of extracting work from a thermodynamic system at apparent no energy cost. 
As a simple example, consider a gas (initially in an equilibrium thermal state) contained in a chamber 
divided into two parts by an insulated wall. By direct inspection (followed by a post-selection) of 
fast particles, the demon would then be able to create a temperature gradient between the two parts, 
which could be used, e.g., as an energy resource for a thermal machine. Naturally, in order to provide a 
continuous work extraction, a cyclic process must be required, which imposes the erasure of the demon's memory 
for a complete thermodynamic accounting. Indeed, the irreversibility of the erasure operation, which is the main 
content of the Landauer principle~\cite{Landauer:61}, is the ultimate reason responsible for the conciliation of 
the Maxwell's demon with the second law of thermodynamics. 

In a modern perspective, we can take a Maxwell's demon as any device with the ability of information processing 
(as a computer modeled by a Turing Machine), where the extraction of work comes at the only cost of memory erasure 
at the end of the process. Remarkably, it has been shown by Zurek in Ref.~\cite{Zurek:03} that a quantum demon, 
which has the ability of implementing a quantum conditional dynamics through global operations over the system, 
can be more efficient in extracting work of a quantum system than any classical demon, which acts through local 
operations and classical communication to implement a classical dynamics. In a bipartite system-apparatus scenario, this difference can be 
quantified by the amount of quantum correlations between system and apparatus, as measured by a 
thermal version of the quantum discord (QD)~\cite{Ollivier:01,Zurek:03}. Indeed, QD has been identified as a general resource in quantum information 
protocols (see, e.g., Refs.~\cite{Madhok:12,Gu:12,locking,Roa:11}). In quantum computation, it has been 
conjectured as the origin of speed up in the deterministic quantum computation with one qubit (DQC1) mixed-state 
model~\cite{DQC1}. Moreover, remarkable applications of QD have also been found in the characterization of quantum 
phase transitions~\cite{QPT} and in the description of quantum dynamics under decoherence~\cite{Decoh}. In this 
context, an operational interpretation of the QD in terms of the efficiency of a Maxwell's demon establishes a 
solid framework to investigate its conceptual role in quantum thermodynamics as well as to inspire new QD-based 
quantum protocols. 

In this work, we aim at investigating the efficiency of both classical and quantum Maxwell's demons in the multipartite 
scenario. In particular, we are interested in analyzing the relationship between the quantum advantage and the existence of multipartite 
quantum correlations. More specifically, provided $n$ copies of an $n$-partite system $(A_1, A_2, \cdots, A_n)$, we introduce a protocol 
of work extraction defined through a sequence of intermediate steps, where the Maxwell demon (either classical or quantum) 
uses a subsystem $A_i$ of copy $i$ ($i=1,\cdots,n$) as a measurement apparatus at a each step, with $i$ a sequential label chosen 
at demon's will. The demon is also required to erase its memory at the end of the process, so that the thermodynamic 
accounting does not disregard the irreversible local cost of erasure. In this context, we will show that the advantage of the quantum demon 
over its classical counterpart in extracting work (from the $n$ copies) is bounded through the amount of multipartite quantum correlation, as measured by a 
thermal version of the global quantum discord (GQD)~\cite{Rulli:11}. 
Indeed, GQD has been introduced as a multipartite approach to quantify quantum correlations, which has been applied to the characterization of 
quantum phase transitions in many-body systems~\cite{Rulli:11,Campbell:11}. Moreover, it can be witnessed with no 
extremization procedure~\cite{Saguia:11} and it has been used as a tool to define a monogamy relationship for the 
standard QD~\cite{Braga:12}. Here, we provide a generalization of GQD to a thermodynamical scenario as well as an 
operational interpretation of this thermal version of GQD. We illustrate our procedure of work extraction 
for an arbitrary $n$-partite pure state of qubits with Schmidt decomposition, where it is shown that the thermal 
GQD exactly quantifies the quantum advantage of the Maxwell's demon. Moreover, we also provide examples of work 
extraction via mixed multipartite states. These examples allow for the discussion of the tightness of the 
bound in situations where saturation is not always achieved.

\section{Maxwell's demon and bipartite quantum correlations}

Work and information are equivalent operational concepts~~\cite{Szilard:1929} (see also, e.g.,  
Refs.~\cite{Alicki:04,Horodecki:05,Shikano:11} for more recent discussions). If a system ${\cal S}$ described by a $d$-level 
pure state $|\psi\rangle$ is available as a resource, we can extract work from ${\cal S}$ by letting it expand 
throughout Hilbert space while in contact with a thermal reservoir at temperature $T$. For such an isothermal process, 
one can draw work $W = k T \log d$ out of the heat bath, where $k$ is the Boltzmann constant adapted to deal with the 
entropy in bits (so that $\log \equiv \log_2$). For a mixed state $\rho$, less work is possible to be extracted, since 
less knowledge (information) about the state of the system is available. In this situation, we should discount the 
necessary work to be performed over the system to drive it to a pure state. This yields  
\begin{equation}
W = k T \left[ \log d - S(\rho) \right],
\label{work-rho}
\end{equation}
where $S(\rho) = -{\textrm{Tr}} \rho \log \rho$ is the von Neumann entropy associated with the quantum state $\rho$. 
In a bipartite system-apparatus ($\cal{SA}$) scheme, we can then write the work $W^Q$ extracted by a quantum demon as 
\begin{equation}
W^Q = k T \left[ \log d_{SA} - S(\rho_{SA}) \right],
\label{work-Q-rho}
\end{equation}
where $d_{SA}$ is the dimension of ${\cal SA}$ and $S(\rho_{SA})$ is its joint von Neumann entropy. 
A classical demon, on the other hand, first implements a local measurement $\{\Pi_{A}^{k}\}$ on
the apparatus, using the measured state to extract work $\log(d_A)-H(\{p_a\})$ from ${\cal A}$, where $H(\{p_a\})=-\sum_a p_a \log p_a$ 
is the Shannon entropy for the probability distribution $\{p_a\}$ associated with the local measurement $\{\Pi_{A}^{k}\}$. Then, an update of ${\cal S}$ 
is performed based on the outcome read from the apparatus and work $\log(d_S)-S(\rho_{S} | \{\Pi_{A}^{k}\})$ is extracted from ${\cal S}$, 
where $S(\rho_{S} | \{\Pi_{A}^{k}\})$ is the 
conditional entropy accessible through $\{\Pi_{A}^{k}\}$, which is given by the weighted average 
\begin{equation}
S(\rho_{S} | \{\Pi_{A}^{k}\} ) = \sum_i p_i S(\rho_i), 
\end{equation}
with $S(\rho_i)$ denoting the von Neumann entropy of the post-measurement state 
$\rho_i = (1/p_i) (I_S \otimes \Pi_A^i) \rho_{SA} (I_S \otimes \Pi_A^i)$ and 
$p_i = {\textrm{Tr} [(I_S \otimes \Pi_A^i) \rho_{SA} (I_S \otimes \Pi_A^i)] }$. 
In this work, for simplicity, 
we will typically restrict $\Pi_{A}^{k}$ as
rank-one 
orthogonal projective measurements rather than arbitrary positive 
operator-valued measures (POVMs).   
The total amount of work extracted is then given by 
\begin{equation}
W^C = k T \left[ \log d_{SA} -  S_A (\rho_{SA}) \right],
\label{work-C-rho}
\end{equation}
where $S_A (\rho_{SA})$ is the locally accessible joint entropy, which reads
\begin{equation}
S_A (\rho_{SA}) =  H(\{p_a\}) + S(\rho_{S} | \{\Pi_{A}^{k}\}) .
\label{LAjoint}
\end{equation}
Remarkable, the minimum difference between $W^Q$ and $W^C$, which is given by 
the best classical strategy, can be quantified by the quantum correlation 
between ${\cal S}$ and ${\cal A}$, as measured by the thermal QD. As defined in Ref.~\cite{Zurek:03}, 
the thermal QD $\mathcal{D}_{th}\left( S | A \right)$ for a composite system $\cal{SA}$ can be suitably expressed 
(with respect to $\cal{A}$) as the difference between the quantum mutual information 
\begin{equation}
I(\rho_{SA}) = S(\rho_S) + S(\rho_A) - S(\rho_{SA})
\label{I-bipart}
\end{equation}
and the locally accessible mutual information 
\begin{equation}
J_A(\rho_{SA}) = S(\rho_S) + S(\rho_A) - S_A(\rho_{SA}), 
\end{equation}
with the difference $I(\rho_{SA}) - J_A(\rho_{SA})$ minimized over all local measurements $\{\Pi_{A}^{k}\}$. 
This reads
\begin{equation}
\mathcal{D}_{th}\left( S | A \right) = \min_{\{\Pi_{A}^{k}\}} \left[ S_A(\rho_{SA}) - S(\rho_{SA}) \right].
\label{QD-asym}
\end{equation}
Then, by using Eqs.~(\ref{work-Q-rho}),~(\ref{work-C-rho}) 
and~(\ref{QD-asym}), it has been shown in Ref.~\cite{Zurek:03} that
\begin{equation}
\Delta W \equiv \min_{i} (W^Q - W^{C_i}) =  kT \, \mathcal{D}_{th}\left( S | A \right) ,
\label{deltaW}
\end{equation}
where $\min_{i}$ denotes the minimum over the difference $ (W^Q - W^{C_i})$ for all the possible strategies 
$\{\Pi_{A}^{k}\}$ for work extraction adopted by a classical demon $C_i$. In terms of conditional entropies, the 
thermal QD can be also written as 
\begin{eqnarray}
\mathcal{D}_{th}\left( S | A \right) &=& \min_{\{\Pi_{A}^{k}\}} \left\{ 
\left[H(\{p_a\}) + S(\rho_{S} | \{\Pi_{A}^{k}\})\right] \right. \nonumber \\
&& \hspace{0.7cm} \left. - \left[S(\rho_A) + S(\rho_S | \rho_A) \right]\right\},
\label{QD-asym-ce}
\end{eqnarray}
with 
\begin{equation}
S(\rho_S | \rho_A) = S(\rho_{SA}) - S(\rho_A)
\end{equation}
denoting the entropy of $\cal{S}$ conditional on $\cal{A}$. 
Therefore, the thermal QD $\mathcal{D}_{th}(S|A)$ is distinct of the original QD  $\mathcal{D}(S|A)$ proposed in 
Ref.~\cite{Ollivier:01}, which is given by 
\begin{equation}
\mathcal{D} \left( S | A \right) = \min_{\{\Pi_{A}^{k}\}} \left[ S(\rho_{S} | \{\Pi_{A}^{k}\}) - S(\rho_S | \rho_A) \right].
\label{QD-asym-orig}
\end{equation}
In particular, the thermal QD is also referred as the one-way work deficit~\cite{Modi:12}. Moreover, it follows that 
$\mathcal{D}_{th} \left( S | A \right) \ge \mathcal{D} \left( S | A \right)$~\cite{Terno:10}. 

\section{Maxwell's demon and multipartite quantum correlations}

Let us now present a thermodynamic protocol able to provide an operational interpretation for multipartite 
quantum correlations as measured by the thermal GQD. In order to introduce the thermal GQD, let us first rewrite 
$\mathcal{D}_{th}\left( S | A \right)$ in terms of loss of total correlation after a non-selective 
measurement~\cite{Luo:10}. This is a measurement characterized by an unrevealed outcome, i.e. the system is measured 
but the outcome is not read out. In a bipartite system $\cal{SA}$ composed of subsystems $\cal{S}$ and $\cal{A}$, 
the thermal QD as given by Eq.~(\ref{QD-asym-ce}) can be expressed as
\begin{eqnarray}
\mathcal{D}_{th}\left( S | A \right) &=& \min_{\{\Pi_{A}^{k}\}} \left\{ 
\left[ I(\rho_{SA}) - I(\Phi_{A}\left( \rho_{SA}\right)) \right] \right. \nonumber \\
&&\left. \hspace{0.7cm} + \left[ H(\{p_a\}) - S(\rho_A) \right] \right\},
\label{QD-asym-mi}
\end{eqnarray}
where $\Phi _{A}\left( \rho_{SA}\right)$ denotes a non-selective measurement $\{\Pi_{A}^{j}\}$ on part $\cal{A}$ of 
$\rho_{SA}$, which reads 
$\Phi _{A}\left( \rho_{SA}\right) = \sum_{j} \left( I_{S} \otimes \Pi_{A}^{j} \right) \rho_{SA} 
\left( I_{S}\otimes \Pi_{A}^{j}\right)$. In order to derive Eq.~(\ref{QD-asym-mi}), we have used that 
$S(\rho_{S} | \{\Pi_{A}^{k}\}) - S(\rho_S | \rho_A) = I(\rho_{SA}) - I(\Phi_{A}\left( \rho_{SA}\right))$~\cite{Rulli:11}. 
Note that Eq.~(\ref{QD-asym-mi}) is asymmetric with respect to measurement on 
$\cal{S}$ and $\cal{A}$, which reflects an asymmetry in the roles of system $\cal{S}$ and apparatus $\cal{A}$. 
Due to the asymmetry of QD, a strictly classical bipartite state requires both $\mathcal{D}_{th}\left( S | A \right) = 0$ and 
$\mathcal{D}_{th}\left( A | S \right) = 0$. Indeed, this corresponds to a density operator 
${\rho}_{AB} = \sum_{i,j} p_{ij} |i\rangle\langle i| \otimes |j\rangle\langle j|$, where $p_{ij}$ is a joint 
probability distribution and the sets $\{|i\rangle\}$ and $\{|j\rangle\}$ constitute orthonormal bases for 
the systems $A$ and $B$, respectively. For an arbitrary bipartite state composed by subsystems $A_1$ and $A_2$, 
such strictly classical states can also be identified by a single measure, which is the symmetrized version of QD
\begin{eqnarray}
\mathcal{D}_{th}\left( A_1 : A_2 \right) &=& 
\min_{\Phi} \left\{\frac{}{}
I({\rho}_{A_1 A_2}) - I(\Phi _{A_1 A_2}\left( {\rho}_{A_1 A_2}\right))\right. \nonumber \\
&&\left. \hspace{0.7cm} + \sum_{i=1}^{2} \left[ H(\{p_{a_i}\}) - S(\rho_{A_i}) \right] \right\}, 
\label{BipartiteDiscord}
\end{eqnarray}
where the measurement operator $\Phi _{A_1 A_2}$ is given by
\begin{equation}
\Phi _{A_1 A_2}\left( {\rho}_{A_1 A_2}\right) =\sum_{j,k} \left({\Pi}_{A_1}^{j}\otimes 
{\Pi}_{A_2}^{k} \right) {\rho}_{A_1 A_2} \left({\Pi}_{A_1}^{j}\otimes {\Pi}_{A_2}^{k}\right) \, .
\end{equation}
By explicitly using Eq.~(\ref{I-bipart}) and the fact that $H(\{p_{a_i}\})=S(\Phi_{A_i}(\rho_{A_i}))$ we can rewrite 
Eq.~(\ref{BipartiteDiscord}) as
\begin{equation}
\mathcal{D}_{th}\left( A_1 : A_2 \right) = \min_{\Phi} 
\left[ S(\Phi _{A_1 A_2}\left( {\rho}_{A_1 A_2}\right)) - S({\rho}_{A_1 A_2}) \right]
\label{Bipartite-QD-sym-final}
\end{equation}
Eq.~(\ref{Bipartite-QD-sym-final}) provides the thermal generalization of the symmetric QD considered in 
Ref.~\cite{Maziero:10} and experimentally witnessed in Refs.~\cite{Auccaise:11,Aguilar:12}. 
The vanishing of $\mathcal{D}_{th}\left( A_1 : A_2 \right)$ 
occurs if and only if the state is fully classical. In particular, the absence of $\mathcal{D}_{th}\left( A_1 : A_2 \right)$ 
can be taken as the key ingredient for local sharing of pre-established correlations (local broadcasting)~\cite{Piani:08}. 

Generalizations of quantum discord to multipartite states have been considered in different 
scenarios~\cite{Modi:11,Chakrabarty:10,Okrasa:11,Rulli:11,Celeri:11,Xu:12}, which intend to account for quantum 
correlations that may exist beyond pairwise subsystems in a composite system. In this direction, one possible 
approach to account multipartite quantum correlations is to start from the symmetrized QD and then to systematically 
extend it to the multipartite scenario. This originates GQD as a measure of global quantum discord, as proposed in 
Ref.~\cite{Rulli:11}. GQD is symmetric with respect to subsystem exchange and shown to be non-negative for arbitrary states~\cite{Rulli:11}. Moreover, it can be detected through a convenient (with no extremization procedure) witness operator~\cite{Saguia:11}. 
In terms of operational interpretation, GQD may play a role in quantum communication, in the sense that its absence 
means that the quantum state simply describes a classical probability multidistribution 
$ \sum_{i_1,\cdots,i_n} p_{i_1 \cdots i_n} |i_1\rangle\langle i_1| \otimes \cdots \otimes |i_n\rangle\langle i_n|$ 
(with $p_{i_1 \cdots i_n} \ge 0$, $\sum p_{i_1 \cdots i_n} = 1$) and, therefore, allows for local broadcasting~\cite{Piani:08}. 
Here we will propose a slightly distinct version of GQD, which will be motivated by an operational interpretation in terms of work extraction in quantum thermodynamics. We will refer to this multipartite measure of quantum correlation as thermal GQD, whose definition is given below. 

\begin{definition}
The thermal GQD $\mathcal{D}_{th}\left( A_1 : \cdots :A_n \right)$ for an arbitrary multipartite state ${\rho}$ composed of subsystems $A_1, \cdots, A_n$ is defined as

\begin{equation}
\mathcal{D}_{th}\left( A_1 : \cdots : A_n \right) = \min_{\Phi} \left[\,
S(\Phi_{A_1 \cdots A_n}\left( {\rho} \right)) - S({\rho}) \right], 
\label{gqd-def}
\end{equation}
where 

\begin{equation}
\Phi_{A_1 \cdots A_n}\left( {\rho} \right) = \sum_{k} {{\Pi}}_{k} \, {\rho} \, 
{{\Pi}}_{k},
\label{ns-measu}
\end{equation}
with ${\Pi}_{k} = {\Pi}_{A_1}^{j_1} \otimes \cdots \otimes {\Pi}_{A_n}^{j_n}$ denoting a set of local measurements and 
$k$ an index string $(j_1 \cdots j_n$). 
\end{definition}
We will now show that the thermal GQD provides an upper bound for the sum of a sequence of bipartite asymmetric thermal discords, 
which will imply in the interpretation of the thermal GQD in terms of a limit of work extraction through 
a protocol of local operations in a multipartite system. This is provided by the Theorem below.

\begin{theorem}
The thermal GQD $\mathcal{D}_{th}\left( A_1 : \cdots :A_n \right)$ for an arbitrary multipartite state 
${\rho}$ composed of subsystems $A_1, \cdots, A_n$ satisfies the inequality
\begin{equation}
\mathcal{D}_{th}\left( A_1 : \cdots : A_n \right) \ge  \sum_{i=1}^{n} \min_{\Phi_{A_1 \cdots A_{i-1}}} {\cal D}_{th}(\Phi_{A_1 \cdots A_{i-1}}(\rho) | A_{i}) , 
\label{Dge-proof} 
\end{equation}
where the asymmetric bipartite contributions ${\cal D}_{th}(\Phi_{A_1 \cdots A_{i-1}}(\rho) | A_{i})$ $(\forall i)$ are provided by Eq~(\ref{QD-asym-mi}), 
with $\min_{\Phi_{A_1 \cdots A_{0}}} {\cal D}_{th}(\Phi_{A_1 \cdots A_{0}}(\rho) | A_{1}) \equiv {\cal D}_{th}(\rho | A_{1})$.
\end{theorem}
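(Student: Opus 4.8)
\emph{Proof idea.} The plan is to rewrite both sides in terms of the entropy produced by non-selective local measurements and then exploit a telescoping decomposition along the ordered chain $A_1, A_2, \ldots, A_n$. As a preliminary step I would record the elementary bipartite identity $S_A(\rho_{SA}) = S(\Phi_A(\rho_{SA}))$ for rank-one projective measurements, which is already implicit in the derivation of Eq.~(\ref{Bipartite-QD-sym-final}): the non-selective measurement $\Phi_A$ renders $\rho_{SA}$ block diagonal in the apparatus basis, so its entropy splits as $H(\{p_a\}) + S(\rho_{S}|\{\Pi_A^k\})$, which is exactly $S_A(\rho_{SA})$ by Eq.~(\ref{LAjoint}). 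Together with Eq.~(\ref{QD-asym}) this yields the compact form
\[
\mathcal{D}_{th}( \sigma | A_i ) = \min_{\{\Pi_{A_i}^k\}}\bigl[\, S(\Phi_{A_i}(\sigma)) - S(\sigma)\,\bigr],
\]
valid for any state $\sigma$ of $A_1\cdots A_n$ once every subsystem other than $A_i$ is regarded as a single ``system'' facing the ``apparatus'' $A_i$.

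Next I would pick a product measurement $\Phi^{\ast} = \Phi^{\ast}_{A_1}\otimes\cdots\otimes\Phi^{\ast}_{A_n}$ of local rank-one projectors that attains (or, by the usual compactness and continuity argument, approaches) the minimum in Eq.~(\ref{gqd-def}), so that $\mathcal{D}_{th}( A_1 : \cdots : A_n ) = S(\Phi^{\ast}(\rho)) - S(\rho)$. Since non-selective projective measurements on distinct tensor factors commute, one has $\Phi^{\ast}_{A_1\cdots A_i} = \Phi^{\ast}_{A_i}\circ\Phi^{\ast}_{A_1\cdots A_{i-1}}$. Setting $\sigma_i \equiv \Phi^{\ast}_{A_1\cdots A_i}(\rho)$, with $\sigma_0 = \rho$ and $\sigma_n = \Phi^{\ast}(\rho)$, the quantity of interest telescopes,
\[
\mathcal{D}_{th}( A_1 : \cdots : A_n ) = \sum_{i=1}^{n}\bigl[\, S(\sigma_i) - S(\sigma_{i-1})\,\bigr] = \sum_{i=1}^{n}\bigl[\, S(\Phi^{\ast}_{A_i}(\sigma_{i-1})) - S(\sigma_{i-1})\,\bigr].
\]

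Finally, each summand is precisely the bracketed expression of the compact bipartite formula evaluated at $\sigma = \sigma_{i-1}$ for the particular measurement $\Phi^{\ast}_{A_i}$, and is therefore bounded below by $\mathcal{D}_{th}(\sigma_{i-1}|A_i) = \mathcal{D}_{th}(\Phi^{\ast}_{A_1\cdots A_{i-1}}(\rho)|A_i)$, which in turn dominates $\min_{\Phi_{A_1\cdots A_{i-1}}}\mathcal{D}_{th}(\Phi_{A_1\cdots A_{i-1}}(\rho)|A_i)$ because we may drop the requirement that the measurement on the first $i-1$ parts be the GQD-optimal one. Summing over $i$ reproduces Eq.~(\ref{Dge-proof}), with the convention $\Phi_{A_1\cdots A_0} = \mathrm{id}$ matching $\sigma_0 = \rho$ for the $i=1$ term. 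I expect the only delicate points to be bookkeeping: using the bipartite thermal-discord formula with the correct system/apparatus grouping at each stage, and justifying the composition rule $\Phi^{\ast}_{A_1\cdots A_i} = \Phi^{\ast}_{A_i}\circ\Phi^{\ast}_{A_1\cdots A_{i-1}}$ that makes the telescoping legitimate; no concavity or subadditivity inequality enters, since the chain is exact up to the two relaxations of the inner minimizations (which also indicates where the bound can be saturated).
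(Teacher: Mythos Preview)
Your proposal is correct and follows essentially the same route as the paper's own proof: a telescoping decomposition of $S(\Phi(\rho))-S(\rho)$ along the chain $A_1,\ldots,A_n$, followed by relaxing the two nested minimizations (over $\Phi_{A_i}$ and over $\Phi_{A_1\cdots A_{i-1}}$) term by term. The only cosmetic difference is that the paper rewrites the telescoping identity through the multipartite mutual information before collapsing it back to entropy differences, whereas you work directly with entropies via the compact identity $\mathcal{D}_{th}(\sigma|A_i)=\min_{\Phi_{A_i}}[S(\Phi_{A_i}(\sigma))-S(\sigma)]$; the logical content is identical.
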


\begin{proof}
In Eq.~(\ref{gqd-def}), let us consider the difference of joint entropies for a fixed measurement 
$\Phi_{A_1 \cdots A_n} \left( {\rho} \right)$, which yields 
\begin{equation}
\mathcal{D}_\Phi\left( A_1 : \cdots : A_n \right) = S(\Phi_{A_1 \cdots A_n}\left( {\rho} \right)) - S({\rho}) .
\end{equation}
By rewriting $\mathcal{D}_\Phi\left( A_1 : \cdots : A_n \right)$ in terms of the multipartite mutual information, we obtain
\begin{eqnarray}
\mathcal{D}_{\Phi}\left( A_1 : \cdots : A_n \right) &=& 
\left\{\frac{}{} I({\rho}) - I(\Phi_{A_1 \cdots A_n}\left( {\rho}\right))\right. \nonumber \\
&&\left. \hspace{-0.3cm} + \sum_{i=1}^{n} \left[ H(\{p_{a_i}\}) - S(\rho_{A_i}) \right] \right\}, 
\label{ThermalGQD-mi}
\end{eqnarray}
where $I({\rho})$ and $I(\Phi_{A_1 \cdots A_n}\left( {\rho} \right))$ are generalizations of the 
mutual information to the multipartite setting~\cite{Groisman:05}, which are given by
\begin{eqnarray}
\hspace{-0.4cm}I({\rho}_{A_1 \cdots A_n}) &=& \sum_{k=1}^{n} S\left( {\rho}_{A_k}\right) - 
S\left( {\rho}_{A_1 \cdots A_n} \right) , \label{Imulti1} \\
\hspace{-0.5cm}I(\Phi_{A_1 \cdots A_n}\left( {\rho} \right)) &=& \sum_{k=1}^{n} S\left( \Phi \left( {\rho}_{A_k}\right) \right) 
- S\left(\Phi_{A_1 \cdots A_n} \left( {\rho}\right) \right) ,
\label{Imulti2}
\end{eqnarray}
where 
\begin{equation}
\Phi \left( {\rho}_{A_k} \right) = \sum_{k^\prime} {\Pi}_{A_k}^{k^\prime} \, {\rho}_{A_k} \,
{\Pi}_{A_k}^{k^\prime},
\end{equation}
with ${\rho}_{A_k}$ denoting the marginal density operator for subsystem $A_k$. In Eq.~(\ref{ThermalGQD-mi}), 
we now rearrange the terms by adding and subtracting the contributions 
 $I(\Phi_{A_1 \cdots A_{i}}(\rho))$ for all $i \in \{1,\cdots,n-1\}$.  
We then obtain
\begin{equation}
{\mathcal{D}}_\Phi(A_1:\cdots:A_n) =  \sum_{i=1}^{n} {\cal D}_\Phi(\Phi_{A_1 \cdots A_{i-1}}(\rho) | A_{i}),
\label{interm-proof}
\end{equation}
where
\[
\mathcal{D}_{\Phi}\left( S | A \right) =  
I(\rho_{SA}) - I(\Phi_{A}\left( \rho_{SA}\right)) 
+  H(\{p_a\}) - S(\rho_A) .
\]
We can relate Eq.~(\ref{interm-proof}) to the thermal GQD through
\begin{equation}
\mathcal{D}_{th}\left( A_1 : \cdots : A_n \right) = \min_{\Phi} {\mathcal{D}}_\Phi(A_1:\cdots:A_n).
\end{equation}
This yields
\[
\mathcal{D}_{th}\left( A_1 : \cdots : A_n \right) \ge 
\sum_{i=1}^{n} \min_{\Phi} {\cal D}_\Phi(\Phi_{A_1 \cdots A_{i-1}}(\rho) | A_{i}),
\]
which implies in Eq.~(\ref{Dge-proof}). 
\end{proof}

Remarkably, Theorem~2 provides a relationship between a symmetric measure of quantum correlation (GQD) 
and a composition of asymmetric operations ($ {\cal D}_{th}(\Phi_{A_1 \cdots A_{i-1}}(\rho) | A_{i})$), which 
involve sequential local measurements over distinct subsystems. In particular, the local measurements yield a 
sequence of bipartite discords that are chained following a rather simple rule. 
As an illustration, for a bipartite system ($n=2$) composed of subsystems $A$ and $B$, 
we can write
\begin{equation}
\mathcal{D}_{th}\left( A : B \right) \ge  {\cal D}_{th}(\rho | A) + 
\min_{\Phi_{A}} {\cal D}_{th}(\Phi_{A}(\rho) | B), 
\end{equation}
while for a tripartite system ($n=3$) composed by subsystems $A$, $B$, and $C$, the bound assumes the form
\begin{eqnarray}
\mathcal{D}_{th}\left( A : B : C \right) &\ge &  {\cal D}_{th}(\rho | A) 
+ \min_{\Phi_{A}} {\cal D}_{th}(\Phi_{A}(\rho) | B) \nonumber \\
&& + \min_{\Phi_{AB}} {\cal D}_{th}(\Phi_{AB}(\rho) | C). 
\label{tripartite-bound}
\end{eqnarray}
Moreover, Eq.~(\ref{Dge-proof}) ensures as a by-product that 
$\mathcal{D}_{th}\left( A_1 : \cdots : A_n \right)\ge 0$. 
Note also that the derivation of Theorem~2 also 
allows for other less restricted bounds. For instance, suppose we take $\Phi_{A_1 \cdots A_n}(\rho)$ 
in Eq.~(\ref{interm-proof}) with local measurement operators ${\Pi}_{k}$ defined by the eigenprojectors of the reduced 
states $\rho_{A_k}$, with $k=1,\cdots,n$. This corresponds to the measurement-induced disturbance (MID) basis~\cite{Luo:08}. 
For this specific basis, Eq.~(\ref{interm-proof}) reads
\[
{\mathcal{D}}_{MID}(A_1:\cdots:A_n) =  \sum_{i=1}^{n} {\cal D}_{MID}(\Phi_{A_1 \cdots A_{i-1}}(\rho) | A_{i}).
\]
By using that ${\cal D}_{MID}(S|A) \ge {\cal D}_{th}(S|A)$ for an arbitrary state $\rho$~\cite{Terno:10}, we can 
establish
\begin{equation}
{\mathcal{D}}_{MID}(A_1:\cdots:A_n) \ge \sum_{i=1}^{n} \min_{\Phi_{A_1 \cdots A_{i-1}}}  \hspace{-0.35cm}{\cal D}_{th}(\Phi_{A_1 \cdots A_{i-1}}(\rho) | A_{i}).
\label{MID-th}
\end{equation}
Eq.~(\ref{MID-th}) provides an upper bound  that is easier to compute than Eq.~(\ref{Dge-proof}), since no extremization is required to determine 
${\mathcal{D}}_{MID}(A_1:\cdots:A_n)$. However, 
as we will see, it can be less tight than ${\mathcal{D}}_{th}(A_1:\cdots:A_n)$. 

\section{Thermodynamic interpretation of the thermal GQD}
The decomposition of the thermal GQD as provided by Eq.~(\ref{Dge-proof}) allows for a thermodynamic interpretation of the multipartite correlations in terms 
of work extraction by Maxwell's demons. 
In particular, Eq.~(\ref{Dge-proof}) implies that the thermal GQD is an upper bound for a series of 
bipartite thermal QDs, which are individually related to differences of performances between 
quantum and classical demons. These bipartite QDs involve locally measured states, 
which can be associated with a sequence of work extractions in a multi-copy version 
of the multipartite system.
In this direction, we will consider, as a physical resource, $n$ copies of an $n$-partite quantum system $(A_1, A_2, \cdots, A_n)$. 
The extraction of work from the system by either a classical or quantum Maxwell's demon will generalize the bipartite protocol through the following procedure: 

\vspace{0.35cm}

\noindent $\bullet$ \hspace{0.1cm} The demon {\it sequentially} takes a subsystem $A_i$ of copy $i$ as a measurement apparatus ($i=1,\cdots,n$). 

\vspace{0.35cm}

\noindent $\bullet$ \hspace{0.1cm}  The quantum demon can then apply conditional global quantum operations by using the apparatus as a control 
system. For the classical demon, arbitrary local measurements over the apparatus are allowed. 

\vspace{0.35cm}

\noindent $\bullet$ \hspace{0.1cm}  For all the subsystems $A_j$, with $j < i$ (in the copy $i$), the classical demon also realizes a non-selective measurement 
such as in Eq.~(\ref{ns-measu}) (with no memory cost) in such a way to minimize the difference of its local extracted 
work at subsystem $A_i$ with respect to the global work extracted by the quantum demon. For the subsystem $A_i$, a selective local measurement is performed 
(with memory cost), followed by the effective work extraction from the copy $i$ of the system. A schematic view of this procedure is provided in Fig.~\ref{f1}. 

\vspace{0.4cm}

\begin{figure}[th]
\centering {\includegraphics[angle=0,scale=0.31]{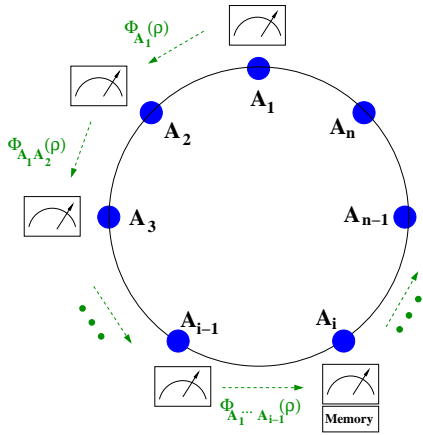}}
\caption{(Color online) Protocol for a sequential work extraction in a multipartite scenario, where the subsystem 
$A_i$ is taken as an apparatus at the intermediate step $i$. For the subsystems $A_j$, with $j < i$, the classical demon applies 
a non-selective measurement to optimize the local extracted work at subsystem $A_i$.}
\label{f1}
\end{figure}

Here we will be interested in the
total amount of work that is possible to be extracted by adding the partial work contributions. 
By extracting work under this procedure, we can then establish an upper bound for the advantage of 
the quantum demon with respect to the classical demon in terms of the thermal GQD. We begin by taking 
the system in a general $n$-partite state $\rho$ and $A_1$ as the  
apparatus. Then, by using Eq.~(\ref{deltaW}), we obtain that the difference of work between a quantum demon 
and the most efficient classical demon is 
$\Delta W_{1} = k\,T\,{\cal D}_{th}(\rho | A_1)$. For the next step, both classical and quantum demons take the second
copy of the system and use $A_2$ as the apparatus. The classical demon is also allowed to perform a non-selective 
measurement over $A_1$ in such a way to minimize the difference with respect to the quantum demon concerning the work 
extraction from $A_2$. Then, we will have 
$\Delta W_{2} = k\,T\, \min_{\Phi_{A_1}} {\cal D}_{th}(\Phi_{A_1}(\rho) | A_2)$.  After $n$ steps, we denote the total work difference as 
$\Delta W_t \equiv \sum_i \Delta W_{i}$. This yields
\begin{equation}
\frac{\Delta W_t} {k\,T\,} = \sum_{i=1}^{n} \min_{\Phi_{A_1 \cdots A_{i-1}}}{\cal D}_{th}(\Phi_{A_1 \cdots A_{i-1}}(\rho) | A_{i}).
\label{DW-proof}
\end{equation}
Hence, by inserting Eq.~(\ref{DW-proof}) into Eq.~(\ref{Dge-proof}), we obtain that GQD provides an upper bound for the 
difference $\Delta W_t$ of total work between a quantum and a classical Maxwell's demon, i.e.
\begin{equation}
\Delta W_t \le k\,T \, {\cal D}_{th}\left( A_1 : \cdots : A_n \right) .
\label{DN}
\end{equation}
As discussed above, an upper (but less strict) bound can be also established in terms of the MID basis, i.e. 
$\Delta W_t \le k\,T \, {\cal D}_{MID}\left( A_1 : \cdots : A_n \right)$. In both cases, note that the invariance 
of the upper bound under exchange of subsystems keeps it robust to a change in the order of measurements for the 
subsystems $A_i$. Moreover, as we will show, we can illustrate the applicability 
of Eq.~(\ref{DN}) in situations where the bound is rather tight or even is saturated. 

\section{Illustrations}

\subsection{Pure states with Schmidt decomposition}

Let us illustrate the upper bound given by Eq.~(\ref{DN}) for the quantum advantage of the Maxwell's demon  
in the case of multipartite pure states $|\psi\rangle$ that admit Schmidt decomposition, whose explicit conditions of 
existence are discussed in Ref.~\cite{SchDec}. We will assume that the system is composed by a set of qubits. 
In such a case, we can write 
$|\psi\rangle = \sum_{i=1}^{2} \sqrt{p_i} |i_{A_1}\rangle \otimes \cdots \otimes |i_{A_n}\rangle$, where 
$\{|i_{A_k}\rangle\}$ are orthonormal bases, $p_i \ge 0$, and $\sum_i p_i = 1$.
For the density operator ${\rho}_{A_1 \cdots A_n}=|\psi\rangle \langle \psi |$, we obtain
\begin{equation}
{\rho}_{A_1 \cdots A_n} = \sum_{i,j=1}^2 \sqrt{p_i p_j} |i_{A_1} \cdots i_{A_n}\rangle \langle j_{A_1} \cdots j_{A_n}| .
\end{equation}
Since Schmidt decomposition implies equal spectrum for all single-qubit reduced density operators ${\rho}_{A_k}$, we obtain 
that $S({\rho}_{A_k}) = -\sum_{i=1}^2 p_k \log_2 p_k \equiv S $, for any individual subsystem $A_k$. Therefore, 
the mutual information is $I({\rho}_{A_1 \cdots A_n}) = n \, S$. In order to consider measurements 
$\Phi\left( {\rho}_{A_1 \cdots A_n} \right)$ over ${\rho}_{A_1 \cdots A_n}$, it can be shown that, 
by adopting projective (von Neumann) measurements, the minimization of the loss of correlation is obtained in Schmidt 
basis, namely, $\{{\Pi}_{A_k}^{i}\} = \{|i_{A_k}\rangle \langle i_{A_k}|\}$. This is a consequence of both the 
group homomorphism of $U(2)$ to $SO(3)$ and the monotonicity of entropy under majorization (see discussion for 
the state $(|0 \cdots 0 \rangle + |1 \cdots 1 \rangle)/\sqrt{2}$ in Ref.~\cite{Xu:12}). 
Then, $\Phi\left( {\rho}_{A_k} \right) = {\rho}_{A_k}$, which implies 
$S(\Phi\left( {\rho}_{A_k} \right)) = S$. Moreover
$\Phi({\rho}_{A_1 \cdots A_n}) = \sum_{i=1}^2 p_i |i_{A_1} \cdots i_{A_N}\rangle \langle i_{A_1} \cdots i_{A_n}|$.
Therefore, the mutual information after measurement is $I(\Phi({\rho}_{A_1 \cdots A_n})) = (n-1) \, S$ and the Shannon entropy $H(\{p_{A_k}\})$ for the 
subsystem $A_k$ for a measurement in Schmidt basis obeys $H(\{p_{A_k}\})=S(\rho_{A_k})$. This  yields $\mathcal{D}_{th}\left( A_1 : \cdots :A_n \right) = S$. 

As an example, let us consider an $n$-qubit pure state 
$|\psi^{(i)}_{A_1 \cdots A_n} \rangle = \alpha |0 \cdots 0 \rangle + \beta |1 \cdots 1 \rangle$, with 
$|\alpha|^2+|\beta|^2 = 1$. In this case, 
$\mathcal{D}_{th}\left( A_1 : \cdots :A_n \right) = S =  -|\alpha|^2 \log  |\alpha|^2 - |\beta|^2 \log  |\beta|^2$. 
The optimal extraction of work by quantum and classical demons are performed through the following strategy:  
By using a qubit state $|0\rangle_D$ as a memory and $A_1$ as an apparatus, 
the quantum demon is able to purify all the individual states of the subsystems $A_1$, $\cdots$, $A_n$, while reseting its memory 
to the ready-to-measure state $|0\rangle_D$. This is obtained through the quantum circuit exhibited in Fig.~\ref{f2}. 
More specifically, this circuit drives the initial state $|\psi^{(i)}_{A_1 \cdots A_n} \rangle \otimes |0\rangle_D$  
to the final state 
\begin{equation}
|\psi^{(f)}_{A_1 \cdots A_n} \rangle \otimes |0\rangle_D = (\alpha |0\rangle + \beta |1\rangle)_{A_1}  \otimes 
|0\rangle_{A_2} \cdots |0\rangle_{A_n} \otimes |0\rangle_{D} .
\end{equation}
Therefore, since any individual subsystem is in a pure state, we obtain from Eq.~(\ref{work-rho}) that the work $W^Q$ 
extracted by the quantum demon is 
\begin{equation}
W^Q = n \, k\, T \, \log 2 .
\label{work-Q-sd}
\end{equation}

\begin{figure}[th]
\centering {\includegraphics[angle=0,scale=0.4]{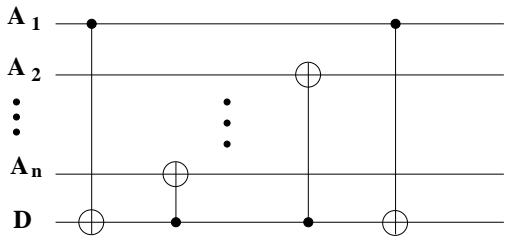}}
\caption{Quantum circuit for work extraction in the case of an $n$-qubit pure state $|\psi^{(i)}_{A_1 \cdots A_n} \rangle = \alpha |0 \cdots 0 \rangle + \beta |1 \cdots 1 \rangle$. 
The initial state is  $|\psi^{(i)}_{A_1 \cdots A_n} \rangle \otimes |0\rangle_D$. 
The quantum demon is able to purify the individual states of the subsystems $A_1$, $\cdots$, $A_n$, while reseting 
its memory to the ready-to-measure state $|0\rangle_D$.}
\label{f2}
\end{figure}

Concerning the classical demon, we assume that one bit of memory is available. Then, a projective local measurement 
in the computational basis over $A_1$ can be implemented to purify the individual states 
of $A_1$, $\cdots$, $A_n$ all at once. Indeed, through a {\it selective} measurement over $A_1$, the demon obtains
\begin{equation}
|\psi^{(i)}_{A_1 \cdots A_n} \rangle \longrightarrow \left[ 
\begin{array}{c}
|0\cdots 0\rangle \,\,  (\textrm{with  probability} \,\, |\alpha|^2 )\\ 
|1\cdots 1\rangle \,\,  (\textrm{with  probability} \,\, |\beta|^2 )
\end{array}%
\right.
\label{meas-WC}
\end{equation}
The outcome of this measurement (either $0$ or $1$) is recorded in the classical memory of the demon, whose state 
(for an outsider) is a probability distribution that can be described by the classical density operator 
$\rho_D = |\alpha|^2 |0\rangle \langle 0|  +  |\beta|^2 |1\rangle \langle 1|$. At this first step, the classical 
demon is then able to extract the  work $n \, k\, T \, \log 2$, with the energy cost of erasure of the demon's bit 
given by $\, k\, T \, S(\rho_D) = \, k\, T \, S$.  
Therefore, we obtain the that the net classical work is given by 
\begin{equation} 
W^C = n \, k\, T \, \log 2  - \, k\, T \, S .
\label{work-C-sd}
\end{equation}
Note that the work $W^C$ in Eq.~(\ref{work-C-sd}) could also be obtained by decohering the quantum demon's qubit 
in the quantum circuit of Fig.~\ref{f2} similarly as discussed for the bipartite case in Ref.~\cite{Zurek:03}. 
For the next steps ($i>1$), the classical demon is able to drive the whole system to a fully classical state by performing 
a suitable non-selective measurement over subsystems $A_j$, with $j<i$. For example, at step $i=2$, 
by non-selectively measuring $A_1$ in the computational basis, the classical demon obtains the classical probability 
distribution 
$\Phi_{A_1}(\rho^{(i)}_{A_1 \cdots A_n}) = \alpha|^2 |0 \cdots 0 \rangle \langle 0 \cdots 0|  +  |\beta|^2 |1 \cdots 1 \rangle \langle 1 \cdots 1|$, 
with $\rho^{(i)}_{A_1 \cdots A_n}  = |\psi^{(i)}_{A_1 \cdots A_n} \rangle \langle \psi^{(i)}_{A_1 \cdots A_n} |$. 
However, because $\Phi_{A_1}(\rho^{(i)}_{A_1 \cdots A_n})$ is already fully classical, the quantum demon cannot obtain any extra advantage from 
this step on by taking any other $A_i$ ($i>1$) as an apparatus. Hence, from Eqs.~(\ref{work-Q-sd}) and~(\ref{work-C-sd}),  the total quantum advantage reads
\begin{equation}
\Delta W_t = \, k\, T \,  \mathcal{D}_{th}\left( A_1 : \cdots :A_n \right) = \, k\, T \,  S,
\end{equation}
which saturates the bound in Eq.~(\ref{DN}) provided by the thermal GQD.

\subsection{Tripartite Werner-GHZ mixed state}

Let us show now that saturation is also possible for mixed composite states. 
In this direction, we take the a tripartite system $ABC$ described by a Werner-GHZ state, which is given by
\begin{equation}
{\rho}=\frac{\left( 1-\lambda \right)}{8} I^{\otimes 3}  
+\lambda \left\vert GHZ\right\rangle \left\langle
GHZ\right\vert ,
\end{equation}
where
$I=|0\rangle\langle 0|+|1\rangle\langle 1|$ is the $2\times 2$ identity  and
$\left\vert GHZ \right\rangle =\left( 
|000\rangle - |111\rangle \right)/\sqrt{2}$, with $0 \le \lambda \le 1$. For this state, the minimization of 
$I({\rho}) - I(\Phi_{ABC}\left( {\rho}\right))$ occurs for measurements in the $\sigma_z$ basis~\cite{Xu:12}. 
In this basis, we obtain $H(\{p_{a}\}) - S(\rho_{A}) = 0$ (and analogous expressions for $B$ and $C$). 
Therefore, from Eq.~(\ref{ThermalGQD-mi}), we can derive that the thermal GQD 
$\mathcal{D}_{th}\left( A:B:C \right)$ is equal to the ordinary GQD (see Refs.~\cite{Rulli:11,Xu:12}), 
yielding
\begin{eqnarray}
&&\mathcal{D}_{th}\left( A:B:C \right) = \left( \frac{1+ 7\lambda}{8} \right) 
\log \left( \frac{1+7\lambda}{8} \right) \nonumber \\
&& + \left( \frac{1-\lambda}{8} \right) \log \left( \frac{1-\lambda}{8} \right) 
   - 2 \left( \frac{1+3\lambda}{8} \right) \log \left( \frac{1+3\lambda}{8} \right). \nonumber
\end{eqnarray}
As in the previous example, the upper bound for work extraction given by Eq.~(\ref{DN}) also saturates, reading 
\begin{equation}
\Delta W_t = k\,T \, \mathcal{D}_{th}\left( A:B:C \right)=  k\,T \, \mathcal{D}_{th}\left( \rho | A \right).
\end{equation}
For the Werner-GHZ state, the contributions $\mathcal{D}_{th}\left( \Phi_{A}(\rho) | B \right)$ and 
$\mathcal{D}_{th}\left( \Phi_{AB}(\rho) | C \right)$ in Eq.~(\ref{tripartite-bound}) vanish by minimizing $\Phi$ 
with measurements in the $\sigma_z$ basis, since $\Phi_{A}(\rho)$ and $\Phi_{AB}(\rho)$ are fully classical states. 

\subsection{Tripartite W-GHZ mixed state}

We can also consider an example of a mixed state for which saturation is not achieved. To illustrate this, 
let us consider a tripartite system $ABC$ described by the W-GHZ state
\begin{equation}
{\rho}=\lambda \left\vert W\right\rangle \left\langle W\right\vert
+\left( 1-\lambda \right) \left\vert GHZ\right\rangle \left\langle
GHZ\right\vert ,
\end{equation}
where
$\left\vert W  \right\rangle =\left( \left\vert 001 \right\rangle + 
\left\vert 010 \right\rangle +\left\vert 100 \right\rangle \right)/\sqrt{3}$ and
$\left\vert GHZ \right\rangle =\left( 
|000\rangle - |111\rangle \right)/\sqrt{2}$, with $0 \le \lambda \le 1$.
\begin{figure}[th]
\centering {\includegraphics[angle=0,scale=0.35]{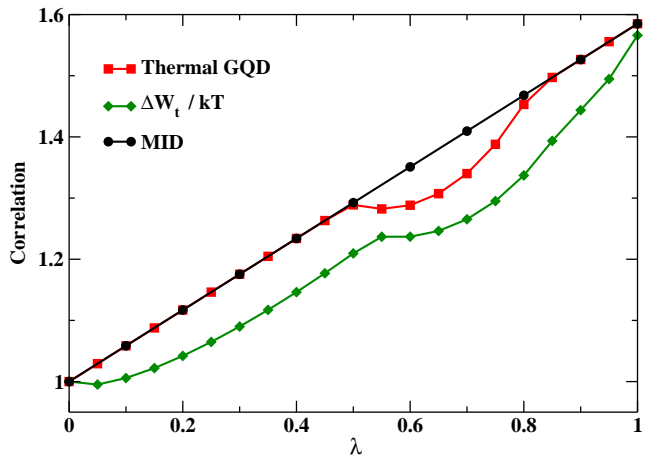}}
\caption{(Color online) Multipartite MID (black circles), thermal GQD (red squares), and the total quantum advantage $\Delta W_t$ for work extraction (green diamonds),  
as a function of $\lambda$ for the W-GHZ state. Note that MID is a global upper bound for the other quantities, while the thermal GQD is an upper bound 
for $\Delta W_t$.}
\label{f3}
\end{figure}
Note that, for $\lambda =0$ and $\lambda =1$, we have pure states, given by $|GHZ\rangle$ and 
$|W\rangle$ states, respectively. Therefore, by adopting projective measurements, we will have 
that $\mathcal{D}_{th}\left( A: B : C \right) = 1$
for $\lambda=0$. As $\lambda$ increases, we numerically find out a monotonic increase of the thermal GQD until 
$\mathcal{D}_{th}\left( A : B: C \right) = \log 3 $ for $\lambda=1$. This can be seen as a consequence of 
the absence of Schmidt decomposition for the $W$ state, which leaves GQD unconstrained by the entropy of an individual subsystem. 
For the complete range of $\lambda$, we plot the thermal GQD in Fig.~\ref{f3} as well as the total quantum advantage 
$\Delta W_t = \min_i (W^Q - W^{C_i})$. As it can be seen, the 
bound provided by the thermal GQD does not saturate for the W-GHZ state, but it is considerably tight for all the values of $\lambda$. Moreover, 
we also plot the less restrict bound provided by the multipartite MID $\mathcal{D}_{MID}\left( A : B: C \right)$, which is given by a smooth function of $\lambda$ due to its 
independence of basis optimization. Note that $\mathcal{D}_{MID}\left( A : B: C \right)$ provides a global upper bound for both $\mathcal{D}_{th}\left( A : B: C \right)$  and 
$\Delta W_t$.

\section{Conclusions}

In conclusion, we have analyzed the extraction of thermodynamic work by a Maxwell's demon in 
a multipartite quantum correlated system. In this direction, we have introduced the thermal 
GQD as a measure of quantum correlation in a multipartite scenario. Moreover, we have shown 
that this measure can be applied as an upper bound for the advantage of the quantum demon over 
its classical counterpart in a protocol of work extraction based on sequential local measurements 
over $n$ copies of the multipartite state. This result provides therefore a thermodynamic 
interpretation of the thermal GQD, which can be explored in the context of quantum thermal 
machines (see, e.g., Ref.~\cite{Park:13}). In particular, heat engines driven by the thermal GQD 
may be investigated as a resource in quantum thermodynamics. 
In this scenario, it would also be interesting to investigate the quantum advantage of a Maxwell's 
demon when only a single copy of a multipartite system is available. Moreover, a further relevant topic 
is the establishment of the conditions for which the proposed thermodynamic protocol may 
constitute the optimal strategy. 
We leave these points for future research.

\begin{acknowledgments}
 
This work is supported by CNPq, CAPES, FAPERJ, and the Brazilian National 
Institute for Science and Technology of Quantum Information (INCT-IQ).

\end{acknowledgments}



\begin{thebibliography}{99}  

\bibitem{Maxwell:1871} J. C. Maxwell, {\it "Theory of Heat"}, Longmans, Green and Co.; London 
and New York, 1888. Reprinted by Dover, New York, 2001.

\bibitem{Landauer:61} R. Landauer, IBM J. Res. Develop. {\bf 5}, 183 (1961).

\bibitem{Zurek:03} W. H. Zurek, Phys. Rev. A {\bf 67}, 012320 (2003). 

\bibitem {Ollivier:01}
H. Ollivier, W. H. Zurek, Phys. Rev. Lett. \textbf{88}, 017901 (2001).

\bibitem{Madhok:12} V. Madhok, A. Datta, Int. J. Mod. Phys. B {\bf 27}, 1245041 (2013).

\bibitem{Gu:12} M. Gu {\it et al.}, Nature Phys. {\bf 8}, 671 (2012).

\bibitem{locking} D. P. DiVincenzo {\it et al.}, Phys. Rev. Lett. {\bf 92}, 067902 (2004); 
A. Datta, S. Gharibian, Phys. Rev. A {\bf 79}, 042325 (2009); 
S. Boixo {\it et al.}, Int. J. Quantum Inf. {\bf 9}, 1643 (2011).

\bibitem{Roa:11} L. Roa, J. C. Retamal, M. Alid-Vaccarezza, Phys. Rev. Lett. {\bf 107}, 080401 (2011).

\bibitem {DQC1} A. Datta, A. Shaji, C. M. Caves, Phys. Rev. Lett. \textbf{100}, 050502 (2008). 

\bibitem{QPT} M. S. Sarandy, Phys. Rev. A \textbf{80}, 022108 (2009); 
T. Werlang {\it et al.}, Phys. Rev. Lett. \textbf{105}, 095702 (2010).

\bibitem{Decoh} J. Maziero {\it et al.}, Phys. Rev. A \textbf{80}, 044102 (2009); 
X. Jin-Shi {\it et al.}, Nat. Commun. {\bf 1}, 7 (2010).

\bibitem{Rulli:11} C. C. Rulli, M. S. Sarandy, Phys. Rev. A {\bf 84}, 042109 (2011).

\bibitem{Campbell:11} S. Campbell, L. Mazzola, M. Paternostro, Int. J. Quantum Inf. {\bf 9}, 1685 (2011).

\bibitem{Saguia:11} A. Saguia {\it et al.}, Phys. Rev. A {\bf 84}, 042123 (2011).

\bibitem{Braga:12} H. C. Braga {\it et al.}, Phys. Rev. A {\bf 86}, 062106 (2012).

\bibitem{Szilard:1929} L. Szilard, Z. Phys. {\bf 53}, 840 (1929).

\bibitem{Alicki:04} R. Alicki {\it et al.}, Open Syst. Inf. Dyn. {\bf 11}, 205 (2004).

\bibitem{Horodecki:05} M. Horodecki {\it et al.}, Phys. Rev. A {\bf 71}, 062307 (2005).

\bibitem{Shikano:11} A. Hosoya, K. Maruyama, Y. Shikano, Phys. Rev. E {\bf 84}, 061117 (2011); 
A. Hosoya, K. Maruyama, Y. Shikano, arXiv: 1301.4854 (2013). 

\bibitem{Modi:12} K. Modi  {\it et al.}, Rev. Mod. Phys. {\bf 84}, 1655 (2012).

\bibitem{Terno:10} A. Brodutch, D. R. Terno, Phys. Rev. A {\bf 81}, 062103 (2010).

\bibitem{Luo:10} S. Luo, S. Fu, Phys. Rev. A {\bf 82}, 034302 (2010).

\bibitem{Maziero:10} 
J. Maziero, L. C. Celeri, R. M. Serra, e-print arXiv:1004.2082 (2010).

\bibitem{Auccaise:11} 
R. Auccaise {\it et al.}, Phys. Rev. Lett. {\bf 107}, 070501 (2011).

\bibitem{Aguilar:12} G. H. Aguilar {\it et al.}, Phys. Rev. Lett. {\bf 108}, 063601 (2012).

\bibitem{Piani:08} M. Piani, P. Horodecki, R. Horodecki, Phys. Rev. Lett. {\bf 100}, 090502 (2008).

\bibitem{Modi:11} K. Modi, V. Vedral, AIP Conf. Proc. {\bf 1384}, 69 (2011).

\bibitem{Chakrabarty:10} I. Chakrabarty, P. Agrawal, A. K. Pati, Eur. Phys. J. D {\bf 65}, 605 (2011).

\bibitem{Okrasa:11} M. Okrasa, Z. Walczak, Europhys. Lett. {\bf 96}, 60003 (2011).

\bibitem{Celeri:11} L. C. C{\'e}leri, J. Maziero, R. M. Serra, Int. J. Quantum Inf. {\bf 9}, 1837 (2011).

\bibitem{Xu:12} J. Xu, Phys. Lett. A {\bf 377}, 238 (2013).

\bibitem{Groisman:05} B. Groisman, S. Popescu, A. Winter, Phys. Rev. A {\bf 72}, 032317 (2005).

\bibitem{Luo:08} S. Luo, Phys. Rev. A 77, 022301 (2008).

\bibitem{SchDec} A. Peres, Phys. Lett. A {\bf 202}, 16 (1995); A. K. Pati, Phys. Lett. A. {\bf 278}, 118 (2000).

\bibitem{Park:13} J. J. Park {\it et al.}, Phys. Rev. Lett. {\bf 111}, 230402 (2013).

\end{thebibliography}
\end{document}